\title{Combinatorial metrics: MacWilliams-type identities, isometries and extension property} 
\author{Jerry Anderson Pinheiro \and Roberto Assis Machado \and Marcelo Firer} 
\institute{University of Campinas - Institute of Mathematics, Statistics ans Scientific Computing\\ 
\email{jerryapinheiro@gmail.com, robertomachado@ime.unicamp.br, mfirer@ime.unicamp.br}}
\begin{document} 
	\maketitle{} 
\begin{abstract}
In this work we characterize the combinatorial metrics admitting a MacWilliams-type identity and describe the group of linear isometries of such metrics. Considering coverings that are not  connected, we classify the metrics satisfying the MacWilliams extension property.
\end{abstract}

\section{Introduction}

In the context of coding theory, different metrics have been used  to provide good (efficient) alternatives to the Maximum a Posteriori Decoders (MAP), which is the ideal observer decoder determined by the distribution probabilities of a given channel. In general, the quality of an encoder is measured by its usefulness and its manageability. 
Due to its structure, many metric decoders (Minimum Distance Decoders - MD) simplify the decoding process. 
The family of combinatorial metrics  attend some usefulness condition: “The b-burst metric can be considered as a combinatorial metric” \cite{gabidulin1973combinatorial}. The study of combinatorial metrics rested nearly untouched since its introduction in 1973 and just recently, after they were recalled in a survey made by Gabidulin in 2012, the interest in these metrics arose. In order to determine the manageability of such metrics, it is necessary to explore the details of the geometry. This is the direction we work here.


Some subfamilies of combinatorial metrics has been widely explored in the literature, as we can see, for example, the block and translational metrics in \cite{feng2006linear} and \cite{mohamedcombinatorial}, respectively. In a general setting, a few number of papers are devoted to these metrics, as one of the exceptions we can cite the work \cite{bossert1996singleton} concerning Singleton-type bounds. Classical coding properties like MacWilliams’ Identities and MacWilliams’ Extensions have not been yet explored in the general case. Our objective is to characterize the combinatorial metrics having a MacWillimas-type Identity and to describe the group of linear isometries of such metrics. Although we still have not classified the metrics with the extension property, we managed to get a partial characterization for the combinatorial metrics determined by not connected coverings. We expect that these partial results and the description of the group of linear isometries may lead us to the complete characterization of combinatorial metrics having the MacWilliams extension property.

This work is organized as follows. In Section \ref{prel} we define the combinatorial metric and the redundancy of a covering. In Section \ref{iden} we characterize the combinatorial metrics that admits a MacWilliams' identity. In Section \ref{isom} we characterize  the group of linear isometries of a space endowed with a combinatorial metric. In Section \ref{ext} we give necessary and sufficient conditions for an unconnected covering to determine a metric which satisfies an extension property of isometries, similar to the MacWilliams    Extension Theorem. 
We remark that, due to lack of space, some technical proofs are omitted and some are shortened.


\section{Preliminaries}\label{prel}

Let $\mathbb{F}_q^n$ be the $n$-dimensional vector space over the  field $\mathbb{F}_q$, $[n]:=\{1,\ldots,n\}$ and $\mathcal{P}_{n}:= \{A: A\subset [n] \}  $  the power set of $[n]$. We say that a family $\mathcal{A} \subset \mathcal{P}_{n}$ is a \textit{covering}  of a set $X\subset [n]$ if, and only if, $X\subset \cup_{A\in\mathcal{A} }A$. 
If $\mathcal{F}$ is a covering of $[n]$, then the $\mathcal{F}$\textit{-combinatorial weight} of $x=(x_1,\ldots,x_n)\in\mathbb{F}_q^n$ is the integer-valued map $\mathrm{wt}_\mathcal{F}$ defined by
\[
	\mathrm{wt}_{\mathcal{F}}(x)=\min \{|\mathcal{A}| \ : \ \mathcal{A}\subset \mathcal{F}  \text{ and }  \mathcal{A} \text{ is a covering of } \mathrm{supp}(x)\}, 
\] 
where $ \mathrm{supp}(x)=\{i\in[n]: x_i\neq 0  \}$ is the \emph{support} of $x$. Each element $A\in \mathcal{F}$ is called a \emph{basic set} of the covering.

As showed in \cite{gabidulin1973combinatorial}, the function $d_{\mathcal{F}}:\mathbb{F}_q^n\times \mathbb{F}_q^n \rightarrow \mathbb{N}$ defined by
\[
	d_\mathcal{F}(x,y)=\mathrm{wt}_{\mathcal{F}}(x-y)
\]
satisfies the metric axioms and is called $\mathcal{F}$\textit{-combinatorial metric}. 


\begin{example}[Block Metrics, \cite{feng2006linear}]
Suppose $\mathcal{F}$ is a partition of $[n]$, that is, the  basic sets are pairwise disjoint.  In this case, the $\mathcal{F}$-combinatorial metric is also called a \emph{block metric}.  In the particular case that every basic set has a unique element ($\mathcal{F}=\{\{1\},\{2\},\ldots,\{n\}\}$), we have the classical Hamming metric.
\end{example}

\begin{example}[$b$-burst Metric, \cite{6774193}]
Given an integer $b$, denote $[b]+i=\{1+i,2+i,\ldots,b+i\}$. Let 
$$\mathcal{F}=\{[b],[b]+1,[b]+2,\ldots,[b]+(n-b)\}$$ 
be the partition over $[n]$ where $b<n$. The metric induced by $\mathcal{F}$ is called the $b$\emph{-burst metric}. 
\end{example}

Note that both the coverings $\mathcal{F}_1=\{[n]\}$ and $\mathcal{F}_2=\{[n],B\}$ where $B\subset[n]$ is any subset, determine the same metric, indeed, for every $x,y\in\mathbb{F}_q^n$, 
$$
d_{\mathcal{F}_1}(x,y)=d_{\mathcal{F}_2}(x,y)=\left\{
\begin{array}
[c]{c}%
0\text{ if } x=y\\
1\text{ if } x\neq y
\end{array}
\right..
$$
In order to eliminate multiplicity (different coverings determining the same metric), we need to define the redundancy of basic sets: given a covering $\mathcal{F}$, we say that $A\in\mathcal{F}$ is $\mathcal{F}$-\emph{redundant} (or just \emph{redundant}) if there is $B\in\mathcal{F}$, with $A\subset B$ and $A\neq B$. We denote by $\overline{\mathcal{F}}$ the set of all redundant basic sets. 

\begin{proposition}\label{prop01}
	Given a covering $\mathcal{F}$ of $[n]$, the set $\mathcal{F}_2=\mathcal{F}\setminus\overline{\mathcal{F}}$ is also a covering of $[n]$ and determines the same combinatorial metric of $\mathcal{F}$.
\end{proposition}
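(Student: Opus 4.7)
The plan is to prove the two claims separately: first that $\mathcal{F}_2$ is still a covering of $[n]$, then that it induces the same weight function (and therefore the same metric) as $\mathcal{F}$.

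For the first claim, the key observation is that $\mathcal{F}$ is finite, so every element of $\mathcal{F}$ is contained in a maximal basic set (maximal with respect to inclusion inside $\mathcal{F}$). A maximal basic set cannot be redundant, since by definition redundancy requires a strictly larger element of $\mathcal{F}$ above it. Given $i \in [n]$, I would pick any $A \in \mathcal{F}$ with $i \in A$ (which exists because $\mathcal{F}$ covers $[n]$), then ascend a finite chain $A \subsetneq A_1 \subsetneq A_2 \subsetneq \cdots$ of elements of $\mathcal{F}$ until it terminates at some maximal $A^\ast \in \mathcal{F} \setminus \overline{\mathcal{F}} = \mathcal{F}_2$ with $i \in A^\ast$. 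Hence $\mathcal{F}_2$ covers $[n]$.

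For the second claim, I would prove the two inequalities $\mathrm{wt}_{\mathcal{F}_2}(x) \geq \mathrm{wt}_{\mathcal{F}}(x)$ and $\mathrm{wt}_{\mathcal{F}_2}(x) \leq \mathrm{wt}_{\mathcal{F}}(x)$ for every $x \in \mathbb{F}_q^n$. The first is immediate since $\mathcal{F}_2 \subset \mathcal{F}$, so any minimal covering of $\mathrm{supp}(x)$ drawn from $\mathcal{F}_2$ is a legal candidate from $\mathcal{F}$. For the reverse inequality, I take an optimal subfamily $\mathcal{A} \subset \mathcal{F}$ with $|\mathcal{A}| = \mathrm{wt}_{\mathcal{F}}(x)$ that covers $\mathrm{supp}(x)$, and I replace each redundant $A \in \mathcal{A} \cap \overline{\mathcal{F}}$ by a maximal element $A^\ast \in \mathcal{F}_2$ with $A \subset A^\ast$ (which exists by the ascending-chain argument above). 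The new family $\mathcal{A}'$ is contained in $\mathcal{F}_2$, still covers $\mathrm{supp}(x)$ (each replacement enlarges the union), and has cardinality at most $|\mathcal{A}|$, giving the desired inequality.

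The main obstacle is really just a bookkeeping one: when replacing redundant sets by maximal supersets, different redundant $A$'s could a priori be replaced by the same $A^\ast$, in which case $|\mathcal{A}'|$ is strictly smaller than $|\mathcal{A}|$, which is fine. I only need that the replacement does not \emph{increase} the cardinality, so this is handled automatically. Everything else is formal manipulation of finite families of subsets.
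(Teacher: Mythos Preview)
Your argument is correct and is exactly the straightforward verification the paper has in mind; the paper's own proof simply reads ``Follows straightforward from the definitions,'' and what you wrote is precisely the routine unpacking of that sentence.
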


\begin{proof}
	Follows straightforward from the definitions.
\end{proof}

From Proposition \ref{prop01}, we may (and will) assume that $\mathcal{F}$ has no redundancy.

\begin{proposition}
	Two different coverings with no redundancy determine different metrics. 
\end{proposition}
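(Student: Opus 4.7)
The plan is to argue by contradiction. Suppose $\mathcal{F}_1 \neq \mathcal{F}_2$ are two no-redundancy coverings of $[n]$ but $d_{\mathcal{F}_1} = d_{\mathcal{F}_2}$. Without loss of generality we may pick some $A \in \mathcal{F}_1 \setminus \mathcal{F}_2$. The key observation is that a vector $x \in \mathbb{F}_q^n$ whose support is exactly a basic set of a covering has combinatorial weight $1$, and conversely vectors of weight $1$ detect the presence of a superset among the basic sets.

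So, first I would fix any $x \in \mathbb{F}_q^n$ with $\mathrm{supp}(x) = A$. Since $\{A\}$ covers $A$, we get $\mathrm{wt}_{\mathcal{F}_1}(x)=1$, and by the assumption $d_{\mathcal{F}_1}=d_{\mathcal{F}_2}$ also $\mathrm{wt}_{\mathcal{F}_2}(x)=1$. Hence there exists $B \in \mathcal{F}_2$ with $A \subseteq B$. Since $A \notin \mathcal{F}_2$, the inclusion must be strict, $A \subsetneq B$.

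Next I would repeat the argument starting from $B$: pick $y$ with $\mathrm{supp}(y)=B$, so that $\mathrm{wt}_{\mathcal{F}_2}(y)=1$ and therefore $\mathrm{wt}_{\mathcal{F}_1}(y)=1$, which yields some $C \in \mathcal{F}_1$ with $B \subseteq C$. Chaining gives $A \subsetneq B \subseteq C$ with $A, C \in \mathcal{F}_1$, i.e.\ $A \subsetneq C$. This contradicts the fact that $\mathcal{F}_1$ has no redundancy (by definition $A$ would then be redundant in $\mathcal{F}_1$).

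There is no real obstacle here; the only subtlety is to be careful that $A \neq B$ (otherwise no contradiction is reached immediately) and to note that $A \neq C$ because $A \subsetneq B \subseteq C$. Once set up, the argument is symmetric in $\mathcal{F}_1$ and $\mathcal{F}_2$, so the case $A \in \mathcal{F}_2 \setminus \mathcal{F}_1$ is handled identically.
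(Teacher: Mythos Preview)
Your argument is correct. The paper itself gives no details at all (it simply asserts that the result ``follows straightforward from the definitions''), so your proof is precisely the kind of elaboration the authors had in mind: weight-$1$ vectors detect supersets among the basic sets, and chaining this observation between the two coverings forces a redundancy.
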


\begin{proof}
		Follows straightforward from the definitions.
\end{proof}

We end this section with a definition which will used many times later.

\begin{definition}\label{Fk}
	A covering $ \mathcal{F}$ is called a \emph{$k$-partition} if it is a partition of $[n]$  and every $A\in \mathcal{F}$ has constant cardinality $k=|A|$. In this case, the $\mathcal{F}$-combinatorial metric is  called an $(\mathcal{F},k)$-combinatorial metric.
\end{definition}



\section{MacWilliams’ Identities}\label{iden}

The classical MacWilliams identity, presented in \cite{macwilliams1963theorem}, is a remarkable result in coding theory that relates, in the case of the Hamming metric, weight enumerators of codes and weight enumerators of their duals. When another metric is in place, to establish such relations may not be possible, as we can see in the counterexamples for the Lee metric constructed in \cite{shi2015note} and in the classification of poset-block metrics admitting a MacWilliams-type identity presented in \cite{pinheiro2012classification}.

Regarding combinatorial metrics, the block metrics is the unique instance where the MacWilliams identities were completely described. For the general case, it is not known if it is possible to obtain such identities. 

The dual of a linear code $\mathcal{C}\subset \mathbb{F}_q^n$ is the space $\mathcal{C}^\perp = \{u\in\mathbb{F}_q^n \ : \ u\cdot c=\sum_{i=1}^n u_ic_i =0 \text{ for every } c\in\mathcal{C}\}$.

The $\mathcal{F}$- \emph{weight enumerator} of a code $\mathcal{C}$ is the polynomial
	\[
		W^\mathcal{F}_\mathcal{C}(x,y)=\sum_{c\in\mathcal{C}} x^{D-wt_{\mathcal{F}}(c)}y^{wt_{\mathcal{F}}(c)}=\sum_{i=0}^D A_i x^{D-i}y^i
	\]
where $A_i=|\{c\in \mathcal{C} \ : \ w(c)=i\}|$ and $D=\max \{ \mathrm{wt}_{\mathcal{F}}(c) \ : \ c\in\mathcal{C} \}$. 
 When no confusion may arise, we write  $	W_\mathcal{C}(x,y)$, omitting the index $\mathcal{F}$.

\begin{definition}
	A combinatorial metric $ d_{\mathcal{F}}$ admits a MacWilliams-type identity if the $\mathcal{F}$-weight enumerator  of a code determines the $\mathcal{F}$-weight enumerator of its dual, i.e., if $	W_{\mathcal{C}_1}(x,y) =	W_{\mathcal{C}_2}(x,y)$  then $	W_{\mathcal{C}^\perp_1}(x,y) =	W_{\mathcal{C}^\perp_2}(x,y)$.
\end{definition}

Restating the results of \cite{feng2006linear} in terms of combinatorial metrics, we have the following:

\begin{proposition}\cite{feng2006linear}\label{prop001}
	Suppose $\mathcal{F}$ is a partition of $[n]$. The combinatorial metric $ d_{\mathcal{F}}$ admits a MacWilliams’ identity if, and only if, $ d_{\mathcal{F}}$ is an $(\mathcal{F},k)$-combinatorial metric for some $k\in\mathbb{N}$.
\end{proposition}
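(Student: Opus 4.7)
The plan is to prove the two implications separately, reducing to a character-theoretic calculation in one direction and to an explicit counterexample in the other. Throughout, write $\mathcal{F}=\{A_1,\ldots,A_m\}$ and $k_i = |A_i|$; note that since $\mathcal{F}$ is a partition, $\mathrm{wt}_{\mathcal{F}}(x)$ is simply the number of blocks on which $x$ is nonzero, so $D\leq m$ and we can extend the enumerator by zeros to degree $m$ without changing the underlying invariant.

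For the ``if'' direction, assume $k_i=k$ for all $i$. I would mimic the classical MacWilliams derivation using the characters of $\mathbb{F}_q^n$, which factor as products of characters on each block $\mathbb{F}_q^{k}$. Concretely, for a nontrivial additive character $\chi$ of $\mathbb{F}_q$ and any $c\in\mathbb{F}_q^n$, the character sum $\sum_{u\in\mathbb{F}_q^k}\chi(c_{A_i}\cdot u_{A_i}) x^{1-[u_{A_i}\ne 0]} y^{[u_{A_i}\ne 0]}$ equals $x+(q^k-1)y$ if $c_{A_i}=0$ and $x-y$ otherwise, since the inner sum over a nonzero linear functional on $\mathbb{F}_q^k$ equals $-1$. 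Taking the product over the $m$ blocks and summing $c$ over $\mathcal{C}$, the Poisson summation formula yields
\[
W_{\mathcal{C}^\perp}(x,y) = \frac{1}{|\mathcal{C}|}\,W_{\mathcal{C}}\bigl(x+(q^k-1)y,\; x-y\bigr).
\]
Since $|\mathcal{C}|=W_{\mathcal{C}}(1,1)$, this explicit transform shows that $W_{\mathcal{C}}$ determines $W_{\mathcal{C}^\perp}$, so $d_{\mathcal{F}}$ admits the MacWilliams-type identity.

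For the ``only if'' direction, suppose some two blocks have distinct cardinalities, say $k_1<k_2$. I would construct two $1$-dimensional codes with equal $\mathcal{F}$-weight enumerators whose duals distinguish themselves on the coefficient of $y$. Let $v_i\in\mathbb{F}_q^n$ be a nonzero vector supported exactly on $A_i$ and set $\mathcal{C}_i=\langle v_i\rangle$, so $W_{\mathcal{C}_i}(x,y)=x^{m}+(q-1)x^{m-1}y$ for $i=1,2$. The dual $\mathcal{C}_i^\perp$ is a hyperplane of $\mathbb{F}_q^n$ whose restriction to $A_i$ is a hyperplane of $\mathbb{F}_q^{k_i}$ and which imposes no condition on the remaining blocks. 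Counting codewords of $\mathcal{F}$-weight exactly one gives
\[
[x^{m-1}y]\,W_{\mathcal{C}_i^\perp}(x,y) = (q^{k_i-1}-1) + \sum_{j\ne i}(q^{k_j}-1),
\]
and the difference between the $i=1$ and $i=2$ values equals $(q-1)(q^{k_2-1}-q^{k_1-1})\neq 0$, so $W_{\mathcal{C}_1^\perp}\ne W_{\mathcal{C}_2^\perp}$. This refutes the MacWilliams-type property.

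The main obstacle is the bookkeeping in the character-sum step of the ``if'' direction: it must be verified that the product structure indexed by blocks of constant size $k$ is exactly what collapses the Poisson identity into a transformation of the bivariate enumerator, and that this collapse fails precisely when block sizes vary (which is consistent with the counterexample). The ``only if'' direction is then a short computation requiring no new machinery.
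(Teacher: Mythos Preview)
Your argument is correct in both directions. The character-sum factorization over blocks of constant size $k$ yields exactly the transform $W_{\mathcal{C}^\perp}(x,y)=|\mathcal{C}|^{-1}W_{\mathcal{C}}\bigl(x+(q^{k}-1)y,\;x-y\bigr)$, and your weight-one count on the duals of the two $1$-dimensional codes cleanly separates the cases $k_1\neq k_2$.

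There is, however, nothing to compare against: in the paper this proposition is not proved but quoted from \cite{feng2006linear} (it is introduced with ``Restating the results of \cite{feng2006linear}\ldots''). The paper's own contribution is the companion Proposition~\ref{prop002}, which handles the non-partition case by a different counterexample. Your proof of the partition case is the standard one from the block-metric literature and would serve perfectly well as a self-contained replacement for the citation.

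One minor point of bookkeeping: the paper sets $D=\max\{\mathrm{wt}_{\mathcal{F}}(c):c\in\mathcal{C}\}$, which is code-dependent, whereas you (sensibly) normalize to $D=m$. Since padding by zero coefficients does not affect equality of enumerators, this is harmless, but it is worth making the convention explicit so that ``$W_{\mathcal{C}_1}=W_{\mathcal{C}_2}$'' is unambiguous.
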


Our goal is to proof that these are all the combinatorial metrics satisfying a MacWilliams identity. 



	\begin{proposition}\label{prop002}
		Let $ d_{\mathcal{F}}$ be a combinatorial metric. If $ d_{\mathcal{F}}$ satisfy a MacWilliams-type identity then $ \mathcal{F}$ is an $(\mathcal{F},k)$-combinatorial metric for some $k$.

	\end{proposition}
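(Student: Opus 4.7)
The plan is to reduce the statement to Proposition \ref{prop001}. Once I show that the MacWilliams hypothesis forces $\mathcal{F}$ to be a partition of $[n]$, Proposition \ref{prop001} immediately upgrades this to an $(\mathcal{F},k)$-partition. Hence the main task is to prove that a non-redundant covering which is not a partition cannot admit a MacWilliams-type identity.

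I argue by contradiction. Suppose $\mathcal{F}$ is non-redundant but not a partition; then there exist distinct basic sets $A, B \in \mathcal{F}$ with $A \cap B \neq \emptyset$, and non-redundancy forces $A \setminus B$ and $B \setminus A$ to be nonempty (in particular $|A|, |B| \geq 2$). Fix $i \in A \setminus B$ and consider the two one-dimensional codes
\[
\mathcal{C}_1 = \langle e_i \rangle \qquad \text{and} \qquad \mathcal{C}_2 = \langle e_A \rangle, \text{ where } e_A := \sum_{k \in A} e_k.
\]
Every nonzero element of $\mathcal{C}_1$ and of $\mathcal{C}_2$ has support contained in the basic set $A$, hence $\mathcal{F}$-weight $1$, so $W_{\mathcal{C}_1}(x,y) = W_{\mathcal{C}_2}(x,y) = x^D + (q-1)x^{D-1}y$. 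On the dual side, $\mathcal{C}_1^\perp = \{x : x_i = 0\}$ is a coordinate hyperplane, while $\mathcal{C}_2^\perp = \{x : \sum_{k \in A} x_k = 0\}$ is a non-coordinate "diagonal" hyperplane.

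To finish, I show $W_{\mathcal{C}_1^\perp} \neq W_{\mathcal{C}_2^\perp}$. Since $\mathrm{wt}_\mathcal{F}$ depends only on the support, each dual weight enumerator can be written as a sum over supports $S \subseteq [n]$, weighted by $(q-1)^{|S|}$ and restricted by the linear condition defining the hyperplane. The overlap of $A$ with $B$ supplies the asymmetry: the support $S = B$ has $w(S) = 1$, and since $i \notin B$ the $(q-1)^{|B|}$ vectors with support exactly $B$ all lie in $\mathcal{C}_1^\perp$, whereas the constraint $\sum_{k \in A} x_k = 0$ excludes most of them from $\mathcal{C}_2^\perp$. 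The extra supports that land in $\mathcal{C}_2^\perp$ to compensate typically meet both $A$ and $[n] \setminus B$, so they cannot be covered by a single basic set and thus contribute at higher powers of $y$. A careful inclusion-exclusion on the basic sets containing $i$ versus those meeting $A$ turns this structural mismatch into a strict inequality among low-degree coefficients of the two dual enumerators, contradicting the MacWilliams hypothesis. The chief technical obstacle is making this count uniform in $q$ and in the overlap pattern of the remaining basic sets with $A$ and $B$; in particular, highly symmetric configurations such as $\mathcal{F}=\{\{1,2\},\{2,3\},\{1,3\}\}$, where the naive comparison $\langle e_i \rangle$ versus $\langle e_j \rangle$ (for $j \in A \cap B$) already balances out at the hyperplane level, force the use of the diagonal code $\langle e_A \rangle$ and are exactly what makes the combinatorial estimate delicate.
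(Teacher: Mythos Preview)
Your overall strategy---reduce to Proposition \ref{prop001} by showing a non-partition cannot satisfy the identity, via a pair of one-dimensional codes with equal primal enumerators but unequal dual enumerators---matches the paper. However, the heart of your argument, the claim that $W_{\mathcal{C}_1^\perp}\neq W_{\mathcal{C}_2^\perp}$, is not actually proved. Phrases like ``the extra supports \ldots\ typically meet both $A$ and $[n]\setminus B$'' and ``a careful inclusion-exclusion \ldots\ turns this structural mismatch into a strict inequality'' are intuitions, not arguments; you yourself flag that making the count uniform in $q$ and in the overlap pattern of the other basic sets with $A$ and $B$ is ``delicate''. As written this is a genuine gap: nothing prevents cancellations among supports of higher weight, and you have not exhibited a single coefficient that provably differs in general.

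The paper sidesteps the counting entirely by a different choice of second code and a bijection trick. Instead of $\mathcal{C}_2=\langle e_A\rangle$, it takes $\mathcal{C}_2=\langle e_{i_0}+e_{i_1}\rangle$ with $i_0\in A\setminus B$ and $i_1\in A\cap B$; then $\mathcal{C}_1^\perp=\{c:c_{i_0}=0\}$ and $\mathcal{C}_2^\perp=\{c:c_{i_0}+c_{i_1}=0\}$ are related by the explicit linear bijection $T(c)=c-c_{i_1}e_{i_0}$. Because $c_{i_0}=0$, one has $\mathrm{supp}(T(c))\supseteq\mathrm{supp}(c)$, so $T$ is \emph{weight non-decreasing}. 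A weight non-decreasing bijection between finite sets with the same weight distribution must be weight-preserving; hence if the dual enumerators were equal, $T$ would preserve weight. But for $c=\sum_{j\in B}e_j\in\mathcal{C}_1^\perp$ one has $\mathrm{wt}_{\mathcal{F}}(c)=1$, while $\mathrm{supp}(T(c))=B\cup\{i_0\}$ is (by non-redundancy) not contained in any basic set, so $\mathrm{wt}_{\mathcal{F}}(T(c))>1$. This gives the contradiction without any inclusion-exclusion. The point your approach is missing is precisely this monotone-bijection device; your choice $\mathcal{C}_2=\langle e_A\rangle$ still admits an analogous map $c\mapsto c-(\sum_{k\in A}c_k)e_i$, but then the witness $c=e_B$ can fail when $|A\cap B|\equiv 0\pmod{\mathrm{char}\,\mathbb{F}_q}$, which is one concrete reason the paper's two-coordinate generator is cleaner.
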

	\begin{proof}
		If $\mathcal{F}$ is a partition of $[n]$, the result follows from Proposition \ref{prop001}.
		Suppose $\mathcal{F}$ is not a partition, i.e., $ \mathcal{F}$ is not an $(\mathcal{F},k)$-combinatorial metric, hence, there are $A,B\in \mathcal{F}$ such that $A\cap B \neq \emptyset$ and $A\neq B$ and let $i_1\in A\cap B$. We shall prove that  $d_{\mathcal{F}}$ does not satisfy a MacWilliams-type identity. 
		
		Assuming that $\mathcal{F}$ has no redundancy we find that there is $i_0\in A\setminus B$. Consider the unidimensional codes over $\mathbb{F}_q^n$ given by $\mathcal{C}_1=span\{e_{i_0}\}$ and $\mathcal{C}_2=span\{e_{i_0}+e_{i_1}\}$.
\begin{figure}[h]
	\centering
	\includegraphics[scale=0.6]{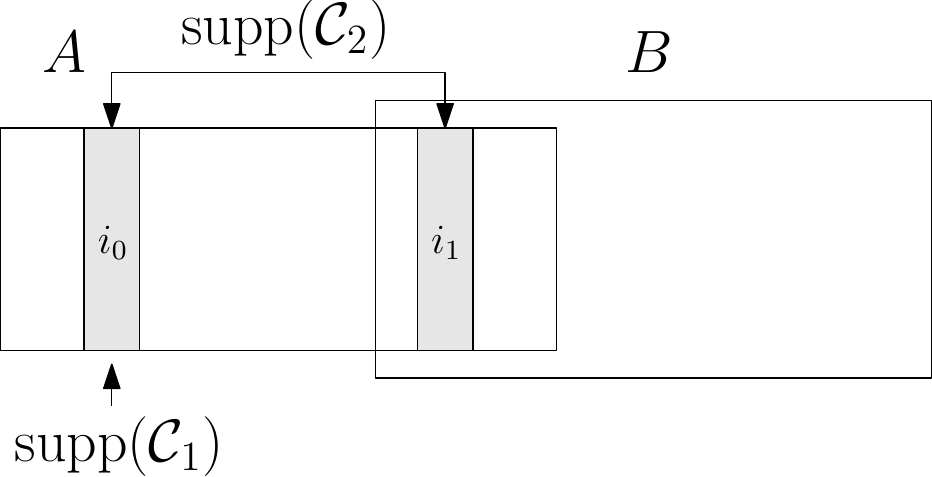}
\end{figure}
		
		\noindent By direct computations we conclude that $$	W_{\mathcal{C}_1}(x,y) =	W_{\mathcal{C}_2}(x,y)=1+(q-1)X^{D-1}y.$$


		Given $c=(c_1,\cdots ,c_n)\in\mathcal{C}_1^\perp$, since $c_{i_0}=0$, we get that $c-c_{i_1}e_{i_0}\in\mathcal{C}_2^\perp$ and hence we get a linear map 
		$T:\mathcal{C}_1^\perp  \rightarrow \mathcal{C}_2^\perp$ by setting $T(c)=c-c_{i_1}e_{i_0}$. By construction, $T$ is an injection, hence a bijection.

It is possible to prove that  $T$ preserves weight ($\mathrm{wt}_{\mathcal{F}}(x)=\mathrm{wt}_{\mathcal{F}}(T(x))$ for every $x\in\mathcal{C}_1^\perp$) if, and only if, $	W_{\mathcal{C}^\perp_1}(x,y) =	W_{\mathcal{C}^\perp_2}(x,y)$.

If $c=\sum_{i\in B} e_i$, then $\mathrm{supp}(c)=B$ and $\mathrm{wt}_{\mathcal{F}}(c)=1$, furthermore, $c\in\mathcal{C}_1^\perp$ because $i_0\not\in B$. Since $\{i_0\}\cup B = \mathrm{supp}(T(c))$ and $\mathcal{F}$ has no redundancy, it follows that $\mathrm{wt}_{\mathcal{F}}(T(c))>1$. Thus, $T$ does not preserve weight and it follows that $W_{\mathcal{C}_1^\perp}(x,y)\neq W_{\mathcal{C}_2^\perp}(x,y)$,  
hence $d_{\mathcal{F}}$ does not satisfy a MacWilliams-type identity.

	\end{proof}

The following theorem is a direct consequence of Propositions \ref{prop001} and \ref{prop002}.

\begin{theorem}
	A combinatorial metric $ d_{\mathcal{F}}$ admits a MacWilliams-type Identity if, and only if, $ d_{\mathcal{F}}$ is an $(\mathcal{F},k)$-combinatorial metric.
\end{theorem}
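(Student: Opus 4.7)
The plan is essentially to observe that the theorem is an immediate combination of the two preceding propositions, with no new content. In particular, no further codes or weight enumerators need to be exhibited: Proposition \ref{prop002} already does the hard work for the forward implication, and Proposition \ref{prop001} does the reverse for the restricted (partition) case.

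First I would handle the ``only if'' direction. Assume $d_{\mathcal{F}}$ admits a MacWilliams-type identity. By Proposition \ref{prop002}, $\mathcal{F}$ must then be an $(\mathcal{F},k)$-combinatorial metric for some $k \in \mathbb{N}$; in other words, $\mathcal{F}$ is forced to be a partition whose basic sets all have the same cardinality. There is nothing more to check here because Proposition \ref{prop002} is stated in exactly this form.

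Next I would handle the ``if'' direction. Suppose $d_{\mathcal{F}}$ is an $(\mathcal{F},k)$-combinatorial metric for some $k$. By Definition \ref{Fk}, this means $\mathcal{F}$ is a partition of $[n]$ with all basic sets of cardinality $k$. Proposition \ref{prop001} characterizes, among partitions, precisely those coverings admitting a MacWilliams identity as the $k$-partitions, so we conclude that $d_{\mathcal{F}}$ admits a MacWilliams-type identity.

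Since both implications follow directly by invoking the corresponding propositions, there is no substantive obstacle: the only ``work'' is to note that the hypothesis of Proposition \ref{prop001} (that $\mathcal{F}$ be a partition) is automatically satisfied in the reverse direction because being an $(\mathcal{F},k)$-combinatorial metric includes this condition by definition. Thus the proof reduces to a one-line citation of the two earlier results.
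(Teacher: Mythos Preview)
Your proposal is correct and matches the paper's approach exactly: the paper states that the theorem is a direct consequence of Propositions \ref{prop001} and \ref{prop002}, which is precisely the one-line citation you describe.
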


\section{Linear $\mathcal{F}$-isometries}\label{isom}

In the context of coding theory, the linear group of isometries has been characterized considering many different metrics (see for example \cite{panek,tuvi}) and been used as a relevant tool to prove coding related results (see \cite{luvifelix,robertoassis}). We aim to characterize the group of linear isometries of a space endowed with a combinatorial metric. We start with some definitions. 

Let us denote by $GL\left(  n,\mathcal{F}\right)  _{q}$ the group of linear
isometries of $\left(  \mathbb{F}_{q}^{n},d_{\mathcal{F}}\right)  $, i.e.,
\begin{align*}
GL\left(  n,\mathcal{F}\right)_{q}=\{   T: \mathbb{F}_{q}^{n}\rightarrow
\mathbb{F}_{q}^{n}\ :& \ T \text{ is linear and } \\
& d_{\mathcal{F}}\left(  x,y\right)  =d_{\mathcal{F}}\left(T  \left( x\right) ,T\left( y\right)  \right), \forall x,y\in \mathbb{F}_{q}^{n} \}.
\end{align*}

\begin{definition}
	Let $\mathcal{F}$ be a covering of $[n]$. We say that a permutation $\phi:\left[  n\right]  \rightarrow\left[  n\right]$ \emph{preserves} $\mathcal{F}$  if $\phi(A)\in\mathcal{F}$, for every $A\in\mathcal{F}$.
\end{definition}

Let $S_n$ be the group of permutations of  $[n]$. Consider the action of $S_n$ on $\mathbb{F}_{q}^{n}$ by permutation of coordinates: given $\phi\in S_n$, we define a map $T:\mathbb{F}_q^n \rightarrow \mathbb{F}_q^n$  as 
$
T_{\phi
}\left(  (  x_{1},x_{2},\ldots,x_{n})  \right)  =(
x_{\phi\left(  1\right)  },x_{\phi\left(  2\right)  },\ldots,x_{\phi\left(
	n\right)  }).
$
%

The first thing we remark is that if $\phi$ preserves $\mathcal{F}$ then $T_\phi \in GL\left(  n,\mathcal{F}\right)_{q}$.

\begin{proposition}
	If $\phi$ preserves $\mathcal{F}$ then $T_\phi$ is a 	is a linear $\mathcal{F}$-isometry. 
\end{proposition}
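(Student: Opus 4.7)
The plan is to verify linearity (which is immediate from the definition of $T_\phi$) and then show that $T_\phi$ preserves the $\mathcal{F}$-weight; since $T_\phi$ is linear and $d_\mathcal{F}(x,y)=\mathrm{wt}_\mathcal{F}(x-y)$, weight-preservation is equivalent to being a $d_\mathcal{F}$-isometry.

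The first ingredient I would establish is the formula $\mathrm{supp}(T_\phi(x))=\phi^{-1}(\mathrm{supp}(x))$, which follows directly from $T_\phi(x)_i=x_{\phi(i)}\neq 0 \iff \phi(i)\in\mathrm{supp}(x)$. The second ingredient is that the hypothesis ``$\phi$ preserves $\mathcal{F}$'' actually yields a \emph{bijection} of $\mathcal{F}$ onto itself: the map $A\mapsto\phi(A)$ sends $\mathcal{F}$ into $\mathcal{F}$ by assumption, and it is injective because $\phi$ is a bijection of $[n]$; since $\mathcal{F}$ is finite, this forces surjectivity, and in particular $\phi^{-1}$ also preserves $\mathcal{F}$.

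Combining these two facts, I would argue as follows. Let $x\in\mathbb{F}_q^n$ and suppose $\mathcal{A}=\{A_1,\ldots,A_k\}\subset\mathcal{F}$ is a covering of $\mathrm{supp}(x)$ realizing $\mathrm{wt}_\mathcal{F}(x)=k$. Consider $\phi^{-1}(\mathcal{A}):=\{\phi^{-1}(A_1),\ldots,\phi^{-1}(A_k)\}$. By the bijection established above, every $\phi^{-1}(A_j)$ lies in $\mathcal{F}$ and the $\phi^{-1}(A_j)$'s are pairwise distinct, so $|\phi^{-1}(\mathcal{A})|=k$. Moreover, applying $\phi^{-1}$ to the inclusion $\mathrm{supp}(x)\subset\bigcup_j A_j$ yields $\phi^{-1}(\mathrm{supp}(x))\subset\bigcup_j \phi^{-1}(A_j)$, i.e., $\mathrm{supp}(T_\phi(x))$ is covered by $\phi^{-1}(\mathcal{A})\subset\mathcal{F}$. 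Hence $\mathrm{wt}_\mathcal{F}(T_\phi(x))\le k=\mathrm{wt}_\mathcal{F}(x)$.

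The reverse inequality comes for free by symmetry: since $\phi^{-1}$ also preserves $\mathcal{F}$ and $T_{\phi^{-1}}=T_\phi^{-1}$, the same argument applied to $T_{\phi^{-1}}$ with input $T_\phi(x)$ gives $\mathrm{wt}_\mathcal{F}(x)\le\mathrm{wt}_\mathcal{F}(T_\phi(x))$. Thus $T_\phi$ preserves the $\mathcal{F}$-weight, hence is a linear $\mathcal{F}$-isometry. There is no real obstacle here; the only step that requires a moment's care is observing that a permutation preserving $\mathcal{F}$ (defined by the one-sided condition $\phi(A)\in\mathcal{F}$) in fact induces a bijection on $\mathcal{F}$, so that $\phi^{-1}$ also preserves $\mathcal{F}$ and the covering count is unaffected.
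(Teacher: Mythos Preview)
Your proof is correct and is precisely the kind of direct verification the paper has in mind: the paper's own proof consists solely of the sentence ``Follows straightforward from the definitions,'' and you have supplied those details. The only subtlety you flagged --- that the one-sided condition $\phi(A)\in\mathcal{F}$ forces $\phi$ to act as a bijection on the finite set $\mathcal{F}$, so that $\phi^{-1}$ also preserves $\mathcal{F}$ --- is exactly the point one needs for the reverse inequality, and it is handled correctly.
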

\begin{proof} 	Follows straightforward from the definitions.
	
\end{proof}

We denote $G:=\{ T_\phi: \phi \text{ preserves } \mathcal{F} \}$. 
A covering $\mathcal{F}$ determines a equivalence relation $\sim_{\mathcal{F}}$ on $[n]$ by the following rule:
\begin{center}
	$i \sim_{\mathcal{F}} j$, if $i\in A_k \iff j \in A_k$.
\end{center}
We denote by $H_1,\cdots ,H_s$ the equivalence classes, so we write $[n]= \bigsqcup_{i=1}^s H_i$.

We stress that if an element of a equivalence class $H_i$ belongs to a basic set $A_j\in\mathcal{F}$, then the entire class $H_i$ is contained in $A_j$. Assuming that, $\mathcal{F} = \{A_1, \ldots, A_r\}$ and $H=\{H_1, \ldots, H_s\}$, let $M=M(\mathcal{F}; H)$ be an incidence matrix, defined as follows
\[
m_{ij}=\left\{
\begin{array}
[c]{c}%
1\text{, if }H_{i}\subset A_{j}\\
0\text{, otherwise.}%
\end{array}
\right.  .
\]
We say that a class $H_i$ \textit{dominates} $H_j$ if $\mathrm{supp}(v^j) \subset \mathrm{supp}(v^i)$, where $v^k$ denotes the $k$-th row of $M$. This is an order relation and we denote it by $H_j\leq H_i$. We say that $H_i$ is a \emph{head} in a family of equivalence classes if it is a maximal element in the family.

Given a subset $X\subset [n]$, there is a minimum set $\mathcal{H} = \{H_{i_1}, \ldots, H_{i_k}\}$ of equivalence classes of $\mathcal{F}$ such that $X\subset H_{i_1}\cup \cdots\cup H_{i_k}$. From $\mathcal{H}$, we construct the subset $\widetilde{\mathcal{H}}$ consisting of all the heads in $\mathcal{H}$. 
The  \emph{Minimum Set Header} (MSH) of $X$ is $\widetilde{X}=\{i\in X: i\in H_j \text{ for some } H_j\in\widetilde{\mathcal{H}}\}$.



Given $x=(x_1, \ldots, x_n)\in\mathbb{F}_q^n$ the \emph{cleared out} form of $x$ is the vector $\widetilde{x} = (\widetilde{x}_1, \ldots, \widetilde{x}_n)$ where $\widetilde{x}_i = x_i$ if $i\in \widetilde{\mathrm{supp}(x)}$ and $\widetilde{x}_i=0$ otherwise.

Let $h_i = |H_i|$ be the cardinality of $H_i$, $N_i = \sum_{j=1}^{i}h_j$ and $N_0 = 0$. Without loss of generality, we may relabel the elements of the equivalence classes by $H_i = [N_{i-1}+1, N_i] = \{N_{i-1}+1, \ldots, N_i\}$. 

An $n\times  n$-matrix $B=(b_{xy})$ with coefficients in $\mathbb{F}_q$ is said to respect $M$ if for every block $B_{ij}=(b_{xy})_{x\in H_i, y\in H_j}$, the following conditions hold:

\begin{enumerate}
	\item Each block $B_{ii}=(b_{xy})_{x,y\in H_i}$ is an invertible matrix; 
	\item If $x\in H_i$ and $y \in H_j $ for $i \neq j$, then $B_{ij} \neq 0$ implies $H_j$ dominates $H_i$.
\end{enumerate}

We denote by $K_M$ as the set of all matrices respecting $M$.

%

\begin{proposition}\label{km_prop}
	An $n\times n$ matrix $B$ respecting $M$ is a linear $\mathcal{F}$-isometry, i.e.,  $B\in GL\left(  n,\mathcal{F}\right)_{q}$. 
\end{proposition}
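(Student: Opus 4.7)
The plan is to prove the weight identity $\mathrm{wt}_{\mathcal{F}}(Bx) = \mathrm{wt}_{\mathcal{F}}(x)$ for every $x \in \mathbb{F}_q^n$, which by linearity of $B$ immediately yields $d_{\mathcal{F}}(Bx,By) = \mathrm{wt}_{\mathcal{F}}(B(x-y)) = \mathrm{wt}_{\mathcal{F}}(x-y) = d_{\mathcal{F}}(x,y)$. I would first establish the easier inequality $\mathrm{wt}_{\mathcal{F}}(Bx) \leq \mathrm{wt}_{\mathcal{F}}(x)$ by choosing a minimum $\mathcal{F}$-covering $\mathcal{A}$ of $\mathrm{supp}(x)$ and showing it also covers $\mathrm{supp}(Bx)$. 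For any $y \in \mathrm{supp}(Bx)$, expanding the relevant coordinate exhibits at least one $z \in \mathrm{supp}(x)$ with the corresponding entry of $B$ nonzero. If $z$ and $y$ lie in the same equivalence class $H_i$, then any $A \in \mathcal{A}$ containing $z$ already contains $H_i$ (basic sets are unions of classes), and so contains $y$. Otherwise, with $z \in H_i$ and $y \in H_j$, $i \neq j$, the block $B_{ij}$ is nonzero, and condition (2) supplies the dominance relation which forces the basic set of $\mathcal{A}$ covering $z$ to also contain the equivalence class of $y$. This gives $\mathrm{wt}_{\mathcal{F}}(Bx) \leq |\mathcal{A}| = \mathrm{wt}_{\mathcal{F}}(x)$.

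For the reverse inequality I would show that $B$ is invertible and that $B^{-1}$ also respects $M$, and then apply the previous step to $B^{-1}$. Relabel the equivalence classes so that the labelling is a linear extension of the dominance partial order; condition (2) then makes $B$ block-triangular, and condition (1) guarantees that its diagonal blocks are invertible, so $B$ itself is invertible. Writing $B = D + N$ with $D$ block-diagonal and $N$ strictly block-triangular, the nilpotence of $N$ yields the finite Neumann expansion
\[
B^{-1} = \Bigl( \sum_{k \geq 0} (-D^{-1}N)^{k} \Bigr) D^{-1},
\]
whose nonzero off-diagonal blocks correspond to chains of dominance relations. Since dominance is transitive, every such chain produces a valid dominance between its endpoints, so each nonzero off-diagonal block of $B^{-1}$ still satisfies condition (2); its diagonal blocks are the inverses of those of $B$ and so remain invertible. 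Applying the first step to $B^{-1}$ gives $\mathrm{wt}_{\mathcal{F}}(x) = \mathrm{wt}_{\mathcal{F}}(B^{-1}(Bx)) \leq \mathrm{wt}_{\mathcal{F}}(Bx)$, completing the proof.

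The main obstacle is verifying that $B^{-1}$ respects $M$. The block-triangular reduction makes invertibility of $B$ essentially free, but the preservation of the sparsity pattern under inversion is subtler: it works precisely because the allowed off-diagonal pattern is closed under composition, which is the content of dominance being an \emph{order} relation rather than an arbitrary relation. If that closure failed, inverting $B$ could introduce off-diagonal blocks violating condition (2). Once that point is settled the remaining verifications, in particular the reduction of distances to weights and the covering-by-covering argument in the first step, are routine.
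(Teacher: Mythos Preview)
Your argument follows exactly the paper's two-step strategy: show that any $\mathcal{F}$-covering of $\mathrm{supp}(x)$ also covers $\mathrm{supp}(Bx)$, giving $\mathrm{wt}_{\mathcal{F}}(Bx)\le\mathrm{wt}_{\mathcal{F}}(x)$, and then establish $B^{-1}\in K_M$ to obtain the reverse inequality. The paper simply asserts that second closure (``it is possible to prove''), while you supply the block-triangular reduction and Neumann-series justification based on transitivity of dominance; this fills in what the paper omits but is not a different route.
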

\begin{proof}
	If a vector $x\in\mathbb{F}_q^n$ has $\mathrm{wt}_{\mathcal{F}}(x)=k$ then, there are $A_1, \ldots, A_k\in \mathcal{F}$ covering the support of $x$, i.e., $\mathrm{supp}(x) \subset A_1 \cup \cdots \cup A_k$. Since $B$ respects $M$, every covering of $\mathrm{supp}(x)$ also covers $\mathrm{supp}(Bx)$, that is, $\mathrm{wt}_{\mathcal{F}}(Bv)\leq \mathrm{wt}_{\mathcal{F}}(v)$. It is possible to prove that  $B\in K_M$ implies that  $B^{-1}\in K_M$ and so we have that $\mathrm{wt}_{\mathcal{F}}(v) = \mathrm{wt}_{\mathcal{F}}(Bv)$.
\end{proof}

The previous proposition ensures that for every vector $x\in\mathbb{F}_q^n$ there is a linear $\mathcal{F}$-isometry $S \in K_M$ such that  $S(x)$ the cleared out of $x$, i.e., $S(x)=\widetilde{x}$.


\begin{lemma}\label{anterior}
	Let $T\in GL(n, \mathcal{F})_q$. Given $e_i\in \mathbb{F}_q^n$, the support of $\widetilde{T(e_i)}$ is contained in some equivalence class of $\mathcal{F}$. 
\end{lemma}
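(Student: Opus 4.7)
The plan is to argue by contradiction after a normalization. Since $\{i\}$ lies in any basic set through $i$, the vector $e_i$ has $\mathcal{F}$-weight one, and as $T$ is an $\mathcal{F}$-isometry, $y:=T(e_i)$ also has weight one, so $\mathrm{supp}(y)\subset A$ for some basic set $A\in\mathcal{F}$. By the remark following Proposition~\ref{km_prop}, there is $S\in K_M$ with $S(y)=\widetilde{y}$; since $K_M\subset GL(n,\mathcal{F})_q$, the composition $S\circ T$ is again an $\mathcal{F}$-isometry, and $(S\circ T)(e_i)=\widetilde{T(e_i)}$. Replacing $T$ by $S\circ T$, I may assume $T(e_i)=\widetilde{T(e_i)}$, so that $\mathrm{supp}(y)$ is contained in the union of the heads $\widetilde{\mathcal{H}}$ of the family $\mathcal{H}$ of equivalence classes meeting $\mathrm{supp}(y)$. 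Proving the lemma reduces to showing $|\widetilde{\mathcal{H}}|=1$.

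Suppose, for contradiction, that $\widetilde{\mathcal{H}}$ contains two distinct heads $H$ and $H'$. As heads they are incomparable, so there exist basic sets $B,B'\in\mathcal{F}$ with $H\subset B$, $B\cap H'=\emptyset$ and $H'\subset B'$, $B'\cap H=\emptyset$. Write $y=y_H+y'$, where $y_H$ is the restriction of $y$ to the positions in $H$ and $y'=y-y_H$; both vectors have support contained in $A$ and are non-zero, hence have weight one, and their preimages $u:=T^{-1}(y_H)$ and $v:=T^{-1}(y')$ are non-zero weight-one vectors with $u+v=e_i$. A short check rules out $u$ being a scalar multiple of $e_i$: if $u=\lambda e_i$ then $y_H=\lambda y$ with $\lambda\neq 0$, forcing $\mathrm{supp}(y)\subset H$, which contradicts $H'\cap\mathrm{supp}(y)\neq\emptyset$. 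Hence $u$ has a non-zero entry at some $j\neq i$.

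The main obstacle is converting this structural information into an actual weight mismatch. The plan is to use the one-parameter family $w_\lambda=e_i+\lambda u$: the image $T(w_\lambda)=y+\lambda y_H$ has support in $A$, hence weight one, so by isometry each $w_\lambda$ has weight one, and for $\lambda$ outside a finite set of cancellation values $\mathrm{supp}(w_\lambda)=\{i\}\cup\mathrm{supp}(u)$, forcing $\{i\}\cup\mathrm{supp}(u)$ inside a single basic set $D$. The symmetric argument applied to $v$ yields a basic set $D'\supset\{i\}\cup\mathrm{supp}(v)$. Pushing the 2-dimensional subspace $\mathrm{span}(u,v)$ forward through $T$ and tracking how the splitting $y_H+y'$ interacts with the basic sets $B$ and $B'$ witnessing the incomparability of $H$ and $H'$, the goal is to show that no basic set on the image side can simultaneously accommodate the $H$-side constraint (support must avoid $H'$) and the $H'$-side constraint (support must avoid $H$) while remaining compatible with the pair $(D,D')$, producing the desired contradiction.
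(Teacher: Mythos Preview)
The paper omits the proof of this lemma entirely (``Due to lack of space, this proof is omitted''), so there is nothing to compare your approach against. Evaluating the proposal on its own merits:

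The setup is sound. The normalization via $S\in K_M$ is legitimate, the decomposition $y=y_H+y'$ with $u=T^{-1}(y_H)$ and $v=T^{-1}(y')$ is well-defined, both $u$ and $v$ have $\mathcal{F}$-weight one, and the check that $u$ is not a scalar multiple of $e_i$ is correct.

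There are, however, two genuine gaps. First, the one-parameter argument ``for $\lambda$ outside a finite set of cancellation values'' needs $q\geq 3$. Over $\mathbb{F}_2$ the only nonzero scalar is $\lambda=1$; since $u_i+v_i=1$, exactly one of $u_i,v_i$ equals $1$, and for that vector (say $u$) the entry of $w_1=e_i+u$ at position $i$ vanishes, so $\mathrm{supp}(w_1)=\mathrm{supp}(u)\setminus\{i\}$ and you cannot conclude $\{i\}\cup\mathrm{supp}(u)\subset D$.

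Second, and more importantly, the last paragraph is a plan rather than a proof. You write ``the goal is to show that no basic set on the image side can simultaneously accommodate\ldots'', but you never exhibit a vector whose weight $T$ fails to preserve, nor do you explain concretely how the preimage-side data $D,D'$ interact with the image-side witnesses $B,B'$ to force a contradiction. The information $\{i\}\cup\mathrm{supp}(u)\subset D$ and $\{i\}\cup\mathrm{supp}(v)\subset D'$ by itself does not single out any vector with mismatched weights on the two sides. To close the argument you must actually construct such a vector; one natural attempt is to work on the image side, using non-redundancy to pick $k\in B\setminus A$ and comparing $\mathrm{wt}_\mathcal{F}(y_H+e_k)$ with $\mathrm{wt}_\mathcal{F}(y+ (y_H+e_k))$, but even this requires a careful choice of $k$ relative to the basic sets containing $H'$, and you have not carried any such step out.
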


\begin{proof} Due to lack of space, this proof is omitted.
\end{proof}

The previous Lemma ensures the existence of an equivalence class that contains $\mathrm{supp}(\widetilde{T(e_i)})$. In the next lemma we prove that this class does not depend on $i$, but only on $T$ and the class containing $i$.


\begin{lemma}\label{anterior2} Given an equivalence class $H$ and an $\mathcal{F}$-isometry $T$, there is an equivalence class $H'$ such that, for every $i\in H$,    $\mathrm{supp}(\widetilde{T(e_i)})\subset H'$.
\end{lemma}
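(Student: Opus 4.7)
The plan is to argue by contradiction. Assume there exist $i, j \in H$ with $i \neq j$ such that $\mathrm{supp}(\widetilde{T(e_i)}) \subset H'_i$ and $\mathrm{supp}(\widetilde{T(e_j)}) \subset H'_j$ (supplied by Lemma \ref{anterior}) but $H'_i \neq H'_j$. The plan is to derive a contradiction by constructing auxiliary isometries in $K_M$ and applying Lemma \ref{anterior} to them.

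First I would exploit that $i \sim_\mathcal{F} j$: for every $\alpha \in \mathbb{F}_q^*$, the vector $e_i + \alpha e_j$ has support inside $H$, and since $H$ lies in every basic set containing any of its elements, $\mathrm{wt}_\mathcal{F}(e_i + \alpha e_j) = 1$. Because $T$ is an $\mathcal{F}$-isometry, $\mathrm{wt}_\mathcal{F}(T(e_i) + \alpha T(e_j)) = 1$ whenever this vector is nonzero. The case where $T(e_i)$ and $T(e_j)$ are proportional is immediate (then $\widetilde{T(e_i)}$ and $\widetilde{T(e_j)}$ are proportional and share a class), so I may assume linear independence, which in turn guarantees $T(e_i) + \alpha T(e_j) \neq 0$ for all $\alpha \in \mathbb{F}_q^*$.

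Second, I would construct $R \in K_M$ with $R(e_i) = e_i + \alpha e_j$. Since the only constraint on the $H \times H$-block of a matrix in $K_M$ is invertibility, the linear map fixing every $e_k$ outside $H$, fixing $e_{i'}$ for $i' \in H \setminus \{i\}$, and sending $e_i \mapsto e_i + \alpha e_j$ lies in $K_M$; by Proposition \ref{km_prop} it is an $\mathcal{F}$-isometry. Then $TR$ is also an isometry and $(TR)(e_i) = T(e_i) + \alpha T(e_j)$. Applying Lemma \ref{anterior} to $TR$ at $e_i$ produces a unique equivalence class $H''_\alpha$ containing $\mathrm{supp}(\widetilde{T(e_i) + \alpha T(e_j)})$.

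Third, I would analyze this MSH. For all but finitely many $\alpha \in \mathbb{F}_q^*$ no coordinate-wise cancellation occurs between $T(e_i)$ and $\alpha T(e_j)$, so the minimum family of classes meeting $\mathrm{supp}(T(e_i) + \alpha T(e_j))$ is the union of those meeting $\mathrm{supp}(T(e_i))$ and $\mathrm{supp}(T(e_j))$. Since both $H'_i$ and $H'_j$ appear in this union and the head $H''_\alpha$ is unique, one of $H'_i, H'_j$ must strictly dominate the other under the domination order, or a third class dominates both. A symmetric construction with $R' \in K_M$ such that $R'(e_j) = e_j + \alpha e_i$ then provides a complementary constraint, and combining the two should force $H'_i = H'_j$, contradicting the assumption.

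The hardest part will be this final collapse. Since $T(e_j) + \alpha T(e_i)$ is a scalar multiple of $T(e_i) + \alpha^{-1} T(e_j)$, the naive symmetric construction actually yields the same direction of domination, so a stronger input is needed. I expect to close the argument by applying Lemma \ref{anterior} also to $T^{-1}$ at basis vectors chosen inside $H'_i$ and $H'_j$, and by leveraging the non-redundancy of $\mathcal{F}$ together with the incidence matrix $M$ to exclude the pathological configurations in which $H''_\alpha$ strictly dominates both $H'_i$ and $H'_j$. This careful structural analysis of how a global $\mathcal{F}$-isometry interacts with the poset of equivalence classes is the technical core of the proof.
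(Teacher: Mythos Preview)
Your setup and the reduction to ``one of $H'_i$, $H'_j$ dominates the other'' track the paper closely; your device of composing with $R\in K_M$ so as to invoke Lemma~\ref{anterior} literally is a clean way to formalize what the paper only asserts (``it is possible to prove that $\widetilde{T(e_i+e_j)}$ lies in a single class'').

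The gap is in the endgame. You correctly observe that the symmetric construction with $R'$ gives nothing new, and you then defer to an unspecified analysis involving $T^{-1}$ and the poset of classes. The paper closes the argument differently and more directly, and this is the idea you are missing: once, say, $H'_i$ strictly dominates $H'_j$, one exhibits a single basis vector $e_k$ for which $\mathrm{wt}_{\mathcal F}\bigl(T(e_j)+T(e_k)\bigr)=1$ while $\mathrm{wt}_{\mathcal F}\bigl(T(e_i)+T(e_k)\bigr)=2$. This is an immediate contradiction, because $i\sim_{\mathcal F} j$ forces $e_i+e_k$ and $e_j+e_k$ to have the same $\mathcal F$-weight, and $T$ preserves weight. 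So the finishing move is not an inverse-isometry argument but a direct separating witness produced from the strict domination; your proposed route through $T^{-1}$ is both vaguer and unnecessary.

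A smaller technical point: your clause ``for all but finitely many $\alpha\in\mathbb F_q^*$ no coordinate-wise cancellation occurs'' is hazardous over small fields---for $q=2$ it may leave you with no admissible $\alpha$ at all. The paper sidesteps this entirely by working with the single vector $e_i+e_j$ and never invoking a genericity-in-$\alpha$ argument.
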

\begin{proof}
	Suppose that $\mathrm{supp}(\widetilde{T(e_i)}) \subset H_1$ and $\mathrm{supp}(\widetilde{T(e_j)}) \subset H_2$ with $H_1 \neq H_2$. It is possible to prove that $\widetilde{T(e_i+e_j)}$ is contained in a unique equivalence class and this implies that either $H_1$ dominates $H_2$ or $H_2$ dominates $H_1$. Let us assume that $H_1$ dominates $H_2$. It means  there is a vector $e_k\in\mathbb{F}_q^n$ such that the $\mathcal{F}$-weight of $T(e_i) + T(e_k)$ is 2 while $T(e_j) + T(e_k)$ has  $\mathcal{F}$-weight 1. It is a contradiction because, by construction, the  vectors $e_i + e_k$ and $e_j + e_k$  have the same $\mathcal{F}$-weight.
\end{proof}


\begin{theorem}
	$GL(n, \mathcal{F}) = G K_M$.
\end{theorem}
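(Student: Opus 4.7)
The plan is to prove the two inclusions separately. The forward inclusion $GK_M \subseteq GL(n,\mathcal{F})_q$ is immediate: the proposition preceding the definition of $G$ gives $G \subseteq GL(n,\mathcal{F})_q$, Proposition \ref{km_prop} gives $K_M \subseteq GL(n,\mathcal{F})_q$, and since $GL(n,\mathcal{F})_q$ is closed under composition, the product set $GK_M$ lies in it.

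For the reverse inclusion, the plan is to start with an arbitrary $T \in GL(n,\mathcal{F})_q$ and explicitly factor it as $T = T_\phi B$ with $T_\phi \in G$ and $B \in K_M$. By Lemma \ref{anterior2}, $T$ induces a map $\Phi$ on the set of equivalence classes of $\sim_\mathcal{F}$, defined by sending $H$ to the unique class $\Phi(H)$ containing $\mathrm{supp}(\widetilde{T(e_i)})$ for every $i \in H$. Applying the same construction to $T^{-1}$ yields an inverse map, so $\Phi$ is a bijection. I would next verify that $\Phi$ preserves the cardinalities of the classes and the domination order; both assertions follow from $T$ being an isometry, by testing on characteristic vectors of unions of classes and comparing $\mathrm{wt}_{\mathcal{F}}$-values before and after applying $T$.

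Having established these properties of $\Phi$, for each class $H$ I would choose an arbitrary bijection $H \to \Phi(H)$ and concatenate them into a permutation $\phi$ of $[n]$. Because $\Phi$ preserves the incidence matrix $M$, the permutation $\phi$ preserves $\mathcal{F}$, so $T_\phi \in G$. Setting $B := T_\phi^{-1} T$, it remains to check the two conditions in the definition of $K_M$. Condition 2, concerning off-diagonal blocks, follows from Lemma \ref{anterior2} together with the fact that any equivalence class meeting $\mathrm{supp}(T(e_i))$ other than the head $\Phi(H)$ is dominated by $\Phi(H)$, a property transported back to $H$ by the incidence-preserving bijection $\phi^{-1}$.

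The main obstacle will be condition 1, the invertibility of each diagonal block $B_{HH}$. To establish it, I would show that the cleared-out vectors $\{\widetilde{T(e_i)} : i \in H\}$ are linearly independent in the coordinate subspace indexed by $\Phi(H)$: a nontrivial relation $\sum_i c_i \widetilde{T(e_i)} = 0$ would force $T(\sum_i c_i e_i)$ to be supported on classes strictly dominated by $\Phi(H)$, and a careful comparison of $\mathrm{wt}_{\mathcal{F}}$ between $\sum_i c_i e_i$ (which has weight $1$, being supported in a single class $H$) and its image would contradict $T$ being an isometry. Combined with the equality $|H| = |\Phi(H)|$ already obtained and the vanishing of contributions coming from off-diagonal blocks into $H$, this independence yields the invertibility of $B_{HH}$ and completes the factorization $T = T_\phi B$.
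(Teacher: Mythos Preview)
Your route differs from the paper's. The paper handles the reverse inclusion by left-multiplying $T$ by elements $S_i\in K_M$ (one per equivalence class $H_i$) chosen so that $S_i$ sends each $T(e_j)$, $j\in H_i$, to its cleared-out form $\widetilde{T(e_j)}$ while fixing previously cleared vectors; it then asserts that the composite $S_1\cdots S_s\,T$ lies in $G$. You instead build the permutation first, via the induced class map $\Phi$, and check that the cofactor $T_\phi^{-1}T$ lies in $K_M$. Both strategies rest on Lemmas~\ref{anterior} and~\ref{anterior2}, and both are in principle viable.

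There is, however, a genuine gap in your sketch. You verify that $\Phi$ preserves the \emph{domination order} and then claim that ``$\Phi$ preserves the incidence matrix $M$'', hence that $\phi$ preserves $\mathcal{F}$. These are not the same: domination records only containments among the row-supports of $M$, not $M$ itself. For instance, for $\mathcal{F}=\{\{1,2\},\{2,3\},\{3,4\},\{4,1\},\{1,3\}\}$ every class is a singleton and no domination relations hold, so every permutation of classes is an order automorphism; yet the transposition $(1\ 2)$ does not map $\mathcal{F}$ to itself (it sends $\{4,1\}$ to $\{4,2\}\notin\mathcal F$). What you actually need is that the particular $\Phi$ arising from an isometry permutes the \emph{columns} of $M$, and this requires a direct use of the isometry property---for example via the fact that basic sets are precisely the maximal supports of weight-$1$ vectors---rather than order preservation alone. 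A smaller issue: your proposed arguments for $|H|=|\Phi(H)|$ and for the invertibility of the diagonal blocks by ``comparing $\mathrm{wt}_{\mathcal F}$-values'' do not work as stated, since any nonzero vector supported in a single class has weight $1$ regardless of the class's cardinality; the clean justification here is linear-algebraic, using that $T$ is invertible and block-triangular with respect to a linear extension of the domination order.
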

\begin{proof}
	Lemmas \ref{anterior} and \ref{anterior2} ensure that for each equivalence class $H_i$ of $\mathcal{F}$, there is $S_i\in K_M$ such that $S_i(T(e_j) = \widetilde{T(e_j)})$ for $j\in H_i$ and $S_i(\widetilde{T(e_k)}) = \widetilde{T(e_k)}$ for every $k\notin H_i$. It follows that, given $T\in GL(n, \mathcal{F})_q$, $S_1S_2\cdots S_nT$ is  a permutation of basic sets of $\mathcal{F}$.
\end{proof}

\section{MacWilliams’ Extension Property}\label{ext}

When working with equivalence relations among linear codes, there are two distinct approaches, a local one and a global one. For the Hamming metric, F. J. MacWilliams, in her thesis (see \cite{macwilliams1962combinatorial}), proved that the two approaches are equivalent. To be more precise, we need some definitions. 

\begin{definition}
	(Local Equivalence) Given an $\mathcal{F}$-combinatorial weight over $\mathbb{F}_q^n$. Two linear codes $\mathcal{C}_1, \mathcal{C}_2 \subset \mathbb{F}_q^n$ are said \emph{locally $\mathcal{F}$-equivalent} if there exist a weight-preserving linear map (local $\mathcal{F}$-equivalence) $t:\mathcal{C}_1\rightarrow \mathcal{C}_2$.
\end{definition}

\begin{definition}
	(Global Equivalence) Two linear codes $\mathcal{C}_1$ and $\mathcal{C}_2$ are said globally $\mathcal{F}$-\emph{equivalent}, or just $\mathcal{F}$-\emph{equivalent}, if there exist a linear isometry ($\mathcal{F}$-equivalence) $T:\mathbb{F}_q^n\rightarrow \mathbb{F}_q^n$ such that $T(\mathcal{C}_1)=\mathcal{C}_2$.
\end{definition}

The MacWilliams result states that, in the Hamming metric case, every weight-preserving linear map $t:\mathcal{C}_1\rightarrow\mathcal{C}_2$ can be extended to a monomial map, hence, in particular, if $\mathcal{F}$ induces the Hamming metric, then two codes are locally $\mathcal{F}$-equivalent if, and only if, they are $\mathcal{F}$-equivalent.
 	
\begin{definition}
	(MacWilliams’ Extension Property - MEP) An $\mathcal{F}$-combinatorial metric satisfies the MacWilliams Extension Property if for any linear codes $\mathcal{C}_1$ and $\mathcal{C}_2$, every local $\mathcal{F}$-equivalence $t:\mathcal{C}_1\rightarrow \mathcal{C}_2$ can be extended to an $\mathcal{F}$-equivalence $T:\mathbb{F}_q^n\rightarrow\mathbb{F}_q^n$, i.e., $T(c)=t(c)$ for every $c\in\mathcal{C}_1$.
\end{definition}




\begin{proposition}\label{samelength}
	If there are $A,B\in\mathcal{F}$ such that $|A|\neq |B|$, then the $\mathcal{F}$-combinatorial metric does not satisfies MEP.
\end{proposition}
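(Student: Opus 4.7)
The plan is to exhibit a pair of one-dimensional linear codes that are locally $\mathcal{F}$-equivalent but whose equivalence cannot be extended to a global $\mathcal{F}$-isometry. Assume without loss of generality that $|A|<|B|$, write $\mathbf{1}_S:=\sum_{i\in S}e_i$ for $S\subset[n]$, and set
\[
\mathcal{C}_1=\langle\mathbf{1}_A\rangle, \qquad \mathcal{C}_2=\langle\mathbf{1}_B\rangle.
\]
Since $A,B\in\mathcal{F}$, every nonzero vector of $\mathcal{C}_i$ has $\mathcal{F}$-weight $1$, so the linear isomorphism $t:\mathcal{C}_1\to\mathcal{C}_2$ sending $\alpha\mathbf{1}_A$ to $\alpha\mathbf{1}_B$ is a local $\mathcal{F}$-equivalence. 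The task then is to show that no linear $\mathcal{F}$-isometry of $\mathbb{F}_q^n$ extends $t$.

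Suppose for contradiction that $T\in GL(n,\mathcal{F})_q$ satisfies $T(\mathbf{1}_A)=\mathbf{1}_B$. By the structure theorem of Section \ref{isom}, factor $T=T_\phi\cdot B'$ with $\phi$ a permutation preserving $\mathcal{F}$ and $B'\in K_M$. The key intermediate step is to verify that $\mathrm{supp}(B'\mathbf{1}_A)\subset A$. Decomposing $A=\bigsqcup_{H_k\subset A}H_k$ into equivalence classes and using the block description of $B'$, one has $(B'\mathbf{1}_{H_k})^i=B'_{ik}\mathbf{1}$, which vanishes unless $i=k$ or the off-diagonal block $B'_{ik}$ is nonzero. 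In the latter case, the domination condition defining $K_M$ forces every basic set containing $H_k$ to also contain $H_i$; combined with $H_k\subset A$, this yields $H_i\subset A$. Summing over the classes $H_k\subset A$ gives $\mathrm{supp}(B'\mathbf{1}_A)\subset A$.

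It then follows that $\mathrm{supp}(T(\mathbf{1}_A))=\phi^{-1}(\mathrm{supp}(B'\mathbf{1}_A))\subset\phi^{-1}(A)$. Since $\phi$ (and hence $\phi^{-1}$) preserves $\mathcal{F}$, the set $\phi^{-1}(A)$ is a basic set of cardinality $|A|$. On the other hand, $T(\mathbf{1}_A)=\mathbf{1}_B$ has support exactly $B$, so $B\subset\phi^{-1}(A)$; the absence of redundancy in $\mathcal{F}$ then forces $B=\phi^{-1}(A)$ and hence $|B|=|A|$, contradicting $|A|<|B|$. The sole delicate point in the whole argument is the block-support estimate $\mathrm{supp}(B'\mathbf{1}_A)\subset A$, which is precisely where the domination hypothesis on the members of $K_M$ is essential; everything else is bookkeeping about the factorization $GL(n,\mathcal{F})_q=GK_M$ and non-redundancy of $\mathcal{F}$.
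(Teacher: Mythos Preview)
Your argument is correct, but it is genuinely different from the paper's. The paper proceeds by an elementary construction that makes no appeal to the structure of $GL(n,\mathcal{F})_q$: assuming $|A|>|B|$, it picks a subset $C\subset A$ with $A\cap B\subset C$ and $|C|=|B|$, and builds a coordinate-permutation map $t$ between the $|B|$-dimensional codes $\mathrm{span}\{e_i:i\in B\}$ and $\mathrm{span}\{e_i:i\in C\}$ (both consisting entirely of vectors of $\mathcal{F}$-weight $\le 1$). Any linear extension $T$ must then send $\sum_{j\in C}e_j+e_{i_0}$ (support contained in $A$, hence weight $1$) to $\sum_{j\in B}e_j+T(e_{i_0})$, and non-redundancy of $\mathcal{F}$ forces the latter to have weight at least $2$. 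So the obstruction is exhibited on a single explicit vector, using only the non-redundancy hypothesis.

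Your route instead uses one-dimensional codes and pushes all the work onto the factorization $GL(n,\mathcal{F})_q=GK_M$ from Section~\ref{isom}: the support estimate $\mathrm{supp}(B'\mathbf 1_A)\subset A$ is exactly the statement in the proof of Proposition~\ref{km_prop} that any covering of $\mathrm{supp}(x)$ also covers $\mathrm{supp}(B'x)$, applied with the singleton covering $\{A\}$. This is conceptually clean and makes the role of the isometry group transparent, but it is less self-contained, since the results of Section~\ref{isom} (in particular Lemma~\ref{anterior}) are only sketched or omitted in the paper. The paper's proof, by contrast, is entirely elementary and independent of Section~\ref{isom}; your approach trades that independence for a shorter argument once the structure theorem is available.
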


\begin{proof}
Let $A,B\in\mathcal{F}$ such that $|A|>|B|$. Define $C\subset [n]$ such that $C\subset A$, $A\cap B \subset C$ and $|C|=|B|$. Let $\sigma:B\setminus A\rightarrow (A\setminus B) \cap C$ be a bijection. Define the linear map $t$ by $t(e_i)=e_i$ for every $i\in A\cap B$ and $t(e_i)=e_{\sigma(i)}$ for every $i\in B\setminus A$. By construction, $t$ is a local $\mathcal{F}$-equivalence. Given $i_0\in A\setminus C$, if $T$ is a linear extension of $t$, then 
\[
	  	T\left(\sum_{j\in C}e_{j}+e_{i_0}\right)=\sum_{j\in B}e_j+T(e_{i_0}).
	  \] 
Since $\mathcal{F}$ has no redundancy, $\mathrm{wt}_{\mathcal{F}}(\sum_{j\in B}e_j+T(e_{i_0}))>1$. Therefore, $T$ is not an isometry.
\end{proof}

In order to characterize the unconnected coverings satisfying the MacWilliams extension property, we need the definition of connected components.

\begin{definition}
  	A covering $\mathcal{F}$ is said to be \emph{connected} if there is no $\mathcal{A},\mathcal{B}\subset \mathcal{F}$ such that $\mathcal{A}\cup \mathcal{B}=\mathcal{F}$ where $A\cap B= \emptyset$ for every $A\in\mathcal{A}$ and $B\in\mathcal{B}$. A connected subset $\mathcal{A}\subset\mathcal{F}$ is called \emph{connected component} of $\mathcal{F}$.
  \end{definition}

\begin{proposition}
	If $\mathcal{F}$ has more than $2$ connected components, then the $\mathcal{F}$-combinatorial metric satisfies MEP if, and only if, it is the Hamming metric. 
\end{proposition}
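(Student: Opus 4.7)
The plan is to prove the equivalence by handling the two directions separately. For $(\Leftarrow)$, if $\mathcal{F}$ is the Hamming covering then MEP is the classical MacWilliams extension theorem \cite{macwilliams1962combinatorial}, so I would just invoke it. The substantive direction is $(\Rightarrow)$: assume MEP and that $\mathcal{F}$ has at least three connected components, and conclude that $d_{\mathcal{F}}$ is the Hamming metric. By Proposition \ref{samelength}, all basic sets already share a common cardinality $k$, so it suffices to force $k = 1$.

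Before attacking $k$, I would first argue that $\mathcal{F}$ must be a partition. If two distinct basic sets $A, B$ satisfied $A \cap B \neq \emptyset$, I would choose (by the no-redundancy assumption) $i_0 \in A \setminus B$ and $i_1 \in A \cap B$. Both $\mathcal{C}_1 = \mathrm{span}\{e_{i_0}\}$ and $\mathcal{C}_2 = \mathrm{span}\{e_{i_0}+e_{i_1}\}$ consist of weight-one vectors, since $\{i_0, i_1\} \subset A$, so the linear map $t(e_{i_0}) = e_{i_0}+e_{i_1}$ is a local $\mathcal{F}$-equivalence. However, the description $GL(n,\mathcal{F})_q = G K_M$ from Section \ref{isom} forces $T(e_{i_0})$ to be supported in a single $\sim_{\mathcal{F}}$-equivalence class for any isometry $T$, whereas $e_{i_0}+e_{i_1}$ straddles two distinct classes ($i_0$ and $i_1$ have different incidence profiles with $\mathcal{F}$), so MEP would fail. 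Hence $\mathcal{F}$ must be a partition into $r \geq 3$ blocks $A_1, \ldots, A_r$ of common size $k$.

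Assuming $k \geq 2$, the core construction compares
\[
\mathcal{C}_1 = \mathrm{span}\{e_{a_1}+e_{a_2},\, e_{a_1}+e_{a_3}\} \quad \text{and} \quad \mathcal{C}_2 = \mathrm{span}\{e_{a_1}+e_{a_2},\, e_{b_1}+e_{b_2}\},
\]
with $a_i \in A_i$ for $i=1,2,3$, and $b_1 \in A_1 \setminus \{a_1\}$, $b_2 \in A_2 \setminus \{a_2\}$. The nonzero codewords of $\mathcal{C}_1$ realize three distinct block-supports $\{A_1, A_2\}, \{A_1, A_3\}, \{A_2, A_3\}$ (the last via $(e_{a_1}+e_{a_3})-(e_{a_1}+e_{a_2}) = e_{a_3}-e_{a_2}$), whereas every nonzero codeword of $\mathcal{C}_2$ has block-support $\{A_1, A_2\}$. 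Using $GL(n, \mathcal{F})_q = G K_M$, any isometry extending a local equivalence $\mathcal{C}_1 \to \mathcal{C}_2$ must act on blocks as a permutation $\sigma \in S_r$ matching the block-support of each codeword with that of its image; but this imposes $\sigma(\{1,2\}) = \sigma(\{1,3\}) = \sigma(\{2,3\}) = \{1,2\}$, which no permutation of $\{1, \ldots, r\}$ can achieve. So no such extension exists, contradicting MEP, and we conclude $k = 1$.

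The main obstacle I foresee is matching the weight distributions of the two displayed codes for general $q$: the argument works transparently for $q = 2$, where both codes have distribution $(1,0,3)$, but for $q \geq 3$ the code $\mathcal{C}_1$ also contains $(q-1)(q-2)$ weight-three codewords with block-support $\{A_1, A_2, A_3\}$ that $\mathcal{C}_2$ lacks. The construction therefore needs to be refined in the general case—for instance by raising the code dimension or enlisting a fourth block to absorb the missing codewords—while preserving the essential combinatorial obstruction that three distinct block-supports in $\mathcal{C}_1$ cannot be mapped simultaneously to the single block-support of $\mathcal{C}_2$ by any permutation.
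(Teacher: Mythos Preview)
Your route differs from the paper's in two ways. First, the paper never reduces to the partition case; it works directly with the three connected components. Second, its counterexample uses two points $a_0,a_1$ from a \emph{single} basic set $A\in\mathcal{A}_1$ and two points $b_0,b_1$ from a single $B\in\mathcal{A}_2$, together with one $c\in C\in\mathcal{A}_3$, setting $t(e_{a_0}+e_{b_0})=e_{a_1}+e_c$ and $t(e_{a_1}+e_{b_1})=-e_{a_1}+e_{b_1}$. The obstruction to extending is then extracted not from $GL(n,\mathcal{F})_q=GK_M$ but by chasing the value $T(e_{b_0})$: the relation $T(e_{a_0})=e_{a_1}+e_c-T(e_{b_0})$ together with $\mathrm{wt}_\mathcal{F}(T(e_{a_0}))=1$ forces $T(e_{b_0})\in\{e_{a_1},e_c\}$, and each possibility is ruled out by evaluating $T(e_{a_1}+e_{b_1}+e_{b_0})=-e_{a_1}+e_{b_1}+T(e_{b_0})$.

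Your self-diagnosed $q\geq 3$ gap is a genuine obstacle for the proposal as written, and the suggested repairs are not clearly available: a fourth component is not granted by the hypothesis, and raising the dimension enlarges the block-support profile you must match rather than shrinking it. Separately, your partition step leans on a claim that is too strong: the decomposition $GL(n,\mathcal{F})_q=GK_M$ constrains only the cleared-out form $\widetilde{T(e_{i_0})}$ to lie in a single $\sim_\mathcal{F}$-class (Lemma~\ref{anterior}), not $T(e_{i_0})$ itself, whose support may spread over several classes related by domination; you would still need to exclude $e_{i_0}+e_{i_1}$ from arising that way. (It is worth noting, finally, that the paper's $t$ as printed shares your $q\geq 3$ difficulty: for instance $(e_{a_0}+e_{b_0})+2(e_{a_1}+e_{b_1})$ has weight~$2$ while its image $-e_{a_1}+2e_{b_1}+e_c$ has weight~$3$, so the verification that $t$ is a local $\mathcal{F}$-equivalence also needs amendment beyond $q=2$.)
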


\begin{proof}
	It is well known that the Hamming metric satisfies MEP. For the opposite direction, we may suppose, without loss of generality, that $\mathcal{F}$ has exactly $3$ connected components: $\mathcal{A}_1$, $\mathcal{A}_2$ and $\mathcal{A}_3$. Furthermore, suppose $d_\mathcal{F}$ does not coincide with the Hamming metric, hence by Proposition \ref{samelength}, $|A|>1$ for every $A\in\mathcal{F}$. Take $A\in\mathcal{A}_1$, $B \in \mathcal{A}_2$ and $C\in\mathcal{A}_3$. Furthermore, take $a_0,a_1\in A$, $b_0,b_1\in B$ and $c\in C$ where $a_0\neq a_1$ and $b_0\neq b_1$. Define     	
	\[
		t(e_{a_0}+e_{b_0})=e_{a_1}+e_{c} \ \
\text{ and } \ \ 
		t(e_{a_1}+e_{b_1})=-e_{a_1}+e_{b_1}.
	\]
By construction, $t$ is a local $\mathcal{F}$-equivalence. Let $T$ be a linear extension of $t$, since  
\begin{equation}\label{eco1}
	T(e_{a_0})=e_{a_1}+e_{c}-T(e_{b_0})
\end{equation}
and
\begin{equation}\label{eco2}
	T(e_{a_1}+e_{b_1}+e_{b_0})=-e_{a_1}+e_{b_1}+ T(e_{b_0}),
\end{equation}
Equation (\ref{eco1}) ensures that $T$ is an isometry if either $T(e_{b_0})=e_{a_1}$ or $T(e_{b_0})=e_{c}$, but in both the cases we get a contradiction by (\ref{eco2}). Therefore, $T$ can not be an $\mathcal{F}$-equivalence. 
\end{proof}

\begin{proposition}
	Suppose $\mathcal{F}$ has two connected components. A combinatorial metric $d_\mathcal{F}$ satisfies MEP if, and only if, $\mathcal{F}$ is a $k$-partition.  
\end{proposition}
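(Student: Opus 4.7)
My plan is to prove the two directions separately; the converse is the substantive one.

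For the implication \emph{$k$-partition $\Rightarrow$ MEP}, assume $\mathcal{F}$ is a $k$-partition. Then $d_{\mathcal{F}}$ coincides with the equal-size block metric, whose linear isometry group is $\mathrm{GL}(k, \mathbb{F}_q)^{n/k} \rtimes S_{n/k}$, acting by a permutation of blocks together with invertible linear maps inside each block. Given a local $\mathcal{F}$-equivalence $t$, preservation of block weights forces $t$ to send each ``block-supported'' subspace to another of the same block cardinality, and one assembles an extension by composing a suitable permutation of blocks with a within-block invertible completion; this is the equal-block instance of the block-metric extension theorem.

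For the converse I argue by contraposition: suppose $\mathcal{F}$ has two connected components and is not a $k$-partition, and exhibit a local equivalence that admits no global extension. By Proposition \ref{samelength}, MEP forces all basic sets to share a common cardinality $k$, and if $k = 1$ the metric is Hamming (a $1$-partition), so I may assume $k \geq 2$. Since $\mathcal{F}$ is not a partition some distinct basic sets overlap, and as the components cover disjoint subsets of $[n]$ the overlap lies within a single component; WLOG pick $A, B \in \mathcal{A}_1$ with $A \cap B \neq \emptyset$, $A \neq B$, then $v \in A \cap B$, $a \in A \setminus B$ (nonempty by the no-redundancy assumption and $|A|=|B|=k$), and any $c \in C$ with $C \in \mathcal{A}_2$. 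Consider $t : \mathrm{span}\{e_a + e_c\} \to \mathrm{span}\{e_a + e_v + e_c\}$ determined linearly by $t(e_a + e_c) = e_a + e_v + e_c$. Both generators have $\mathcal{F}$-weight $2$ (covered by $\{A, C\}$ since $\{a, v\} \subseteq A$, and not by any single basic set since their supports meet both components), so $t$ is a local $\mathcal{F}$-equivalence.

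Now suppose some $T \in GL(n, \mathcal{F})$ extends $t$. Writing $T = \sigma B$ via the structure theorem of Section \ref{isom}, and using that $\mathrm{supp}(B e_i) \subseteq \bigcup_{H' \leq H_i} H'$, the identity $T e_a + T e_c = e_a + e_v + e_c$ yields $\{\sigma(a), \sigma(v), \sigma(c)\} \subseteq \bigl(\bigcup_{H' \leq H_a} H'\bigr) \cup \bigl(\bigcup_{H' \leq H_c} H'\bigr)$. Writing $\mathrm{type}(H_i) := \mathrm{supp}(v^{i})$ for the set of basic sets containing a class, and noting that types from different components are disjoint, the induced permutation $\tilde{\sigma}$ of $\mathcal{F}$ either fixes or swaps the components $\mathcal{A}_1, \mathcal{A}_2$. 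A short analysis then forces, in the fix case, $\tilde{\sigma}(\mathrm{type}(a)) = \mathrm{type}(a)$, $\tilde{\sigma}(\mathrm{type}(c)) = \mathrm{type}(c)$ and $\tilde{\sigma}(\mathrm{type}(v)) \subseteq \mathrm{type}(a)$; in the swap case, $\tilde{\sigma}(\mathrm{type}(a)) = \mathrm{type}(c)$ and $\tilde{\sigma}(\mathrm{type}(v)) \subseteq \mathrm{type}(c)$. Since $B \in \mathrm{type}(v) \setminus \mathrm{type}(a)$, the last inclusion delivers $\tilde{\sigma}(B) \in \mathrm{type}(a)$ (resp.\ $\mathrm{type}(c)$), which contradicts $\tilde{\sigma}$ being a bijection of $\mathcal{F}$ sending $\mathcal{F} \setminus \mathrm{type}(a)$ to itself (resp.\ to $\mathcal{F} \setminus \mathrm{type}(c)$).

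The most delicate step is the swap case, where I must first use both inclusions $\tilde{\sigma}(\mathrm{type}(a)) \subseteq \mathrm{type}(c)$ and $\tilde{\sigma}(\mathrm{type}(c)) \subseteq \mathrm{type}(a)$ to force $|\mathrm{type}(a)| = |\mathrm{type}(c)|$ and promote $\tilde{\sigma}$ to an honest bijection between the two type sets before the final contradiction on $\tilde{\sigma}(B)$ is available.
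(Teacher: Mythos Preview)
Your argument for the direction ``not a $k$-partition $\Rightarrow$ MEP fails'' is correct, but it is genuinely different from the paper's. The paper builds a \emph{two-dimensional} counterexample: with $A,B\in\mathcal{A}_1$ overlapping and $C\in\mathcal{A}_2$, it sets $u=\sum_{i\in A}e_i$, $v=\sum_{i\in B\setminus A}e_i$, $w=\sum_{i\in A\cap B}e_i$ and the local map $t(u)=e_{j_0}$ ($j_0\in C$), $t(v)=e_{i_0}$ ($i_0\in A\setminus B$); for any linear extension $T$ one then looks at $T(\sum_{i\in B}e_i)=e_{i_0}+T(w)$ and $T(\sum_{i\in A\setminus B}e_i)=e_{j_0}-T(w)$, and the weight-$1$ constraint forces $\mathrm{supp}(T(w))$ into \emph{both} connected components, a contradiction. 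This is entirely elementary and makes no use of the structure theorem $GL(n,\mathcal{F})=G\,K_M$. Your route instead uses a \emph{one-dimensional} counterexample $e_a+e_c\mapsto e_a+e_v+e_c$ and then invokes the decomposition $T=\sigma B$ together with the domination order on equivalence classes to derive a contradiction on the induced permutation $\tilde{\sigma}$ of $\mathcal{F}$. What the paper's approach buys is self-containment (no Section~\ref{isom} machinery, no case split on fix/swap); what yours buys is a smaller code and an illustration of how the isometry-group description actually constrains extensions.

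For the direction ``$k$-partition $\Rightarrow$ MEP'' the paper explicitly omits the proof, calling it ``a lengthy and delicate construction of the desired extension''. Your appeal to an ``equal-block instance of the block-metric extension theorem'' is therefore not something one can compare against the paper; just be aware that the paper does not treat this as an immediate citation but as a nontrivial construction, so if you intend to rely on an external result here you should point to a precise reference rather than a named theorem.
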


\begin{proof}
Suppose that $d_\mathcal{F}$ satisfies MEP and that $\mathcal{F}$ has two connected components $\mathcal{A}_1$ and $\mathcal{A}_2$. By Proposition \ref{samelength} we have that $|A|=|B|$ for every $A,B\in\mathcal{F}$. 
Suppose that $|\mathcal{A}_1|>1$. Thus, there exist $A,B\in\mathcal{A}_1$ such that $A\cap B\neq \emptyset$. Take $C\in\mathcal{A}_2$ and define
\[
	u=\sum_{i\in A} e_i, \ \ \ v=\sum_{i\in B\setminus A} e_i \ \ \ \text{and} \ \ \  w=\sum_{i\in A\cap B} e_i.
\]
Define $t$ by $t(u)=e_{j_0}$ and $t(v)=e_{i_0}$ where $i_0\in A\setminus B$ and $j_0\in C$. By construction $t$ is a local $\mathcal{F}$-equivalence. Suppose $T$ is an $\mathcal{F}$-equivalence and an extension of $t$. Note that if $M=\sum_{i\in B} e_i$, then 
\[
 	T(M)=T(v)+T(w)=e_{i_0}+T(w).
 \] 
 Since $\mathrm{wt}_{\mathcal{F}}(T(M))=1$, it follows that $\mathrm{supp}(T(w))\subset \cup_{A\in\mathcal{A}_{i}} A$ if, and only if, $\mathrm{supp}(e_{i_0})\subset \cup_{A\in\mathcal{A}_{i}} A$. On the other hand, if $N=\sum_{i\in A\setminus B} e_i$, then
 \[
   	T(N)=T(u)-T(w)=e_{j_0}-T(w).
   \]  
   Since $\mathrm{wt}_{\mathcal{F}}(T(N))=1$, it follows that $\mathrm{supp}(T(w))\subset \cup_{A\in\mathcal{A}_{j}} A$ if, and only if, $\mathrm{supp}(e_{j_0})\subset \cup_{A\in\mathcal{A}_{j}} A$. But $\mathrm{supp}(e_{j_0})\subset \cup_{A\in\mathcal{A}_{j}} A$ and $\mathrm{supp}(e_{i_0})\subset \cup_{A\in\mathcal{A}_{j}} A$ with $i\neq j$. Hence, $T$ can not be an isometry. The other implication is a lengthy and delicate construction of the desired extension. Due to the limitations of space, it will be omitted.
\end{proof}

 Summarizing the previous results, we have the following theorem:
\begin{theorem}
	If $\mathcal{F}$ is unconnected with $l$ connected components, $d_\mathcal{F}$ satisfies MEP if, and only if, either $l=2$ and $\mathcal{F}$ is a $k$-partition or $l>2$ and $d_\mathcal{F}$ is the Hamming metric.
\end{theorem}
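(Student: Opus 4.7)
The plan is to recognize that this theorem is simply the consolidation of the two preceding propositions, which together cover every possible number of connected components an unconnected covering can have. Since $\mathcal{F}$ is assumed unconnected, we have $l \geq 2$, so it suffices to split the argument into the two mutually exclusive and exhaustive cases $l = 2$ and $l > 2$, each already treated by exactly one of the previous propositions.

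For the forward implication, I would assume that $d_{\mathcal{F}}$ satisfies MEP and perform a case split on $l$. If $l = 2$, the proposition on coverings with two connected components immediately yields that $\mathcal{F}$ must be a $k$-partition. If $l > 2$, the proposition on coverings with more than two connected components immediately yields that $d_{\mathcal{F}}$ must be the Hamming metric. In either case, the appropriate disjunct of the stated conclusion holds.

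For the converse, I would assume the disjunction in the conclusion and again split on which disjunct applies. If $l = 2$ and $\mathcal{F}$ is a $k$-partition, the reverse direction of the two-component proposition (the construction of the extension that was omitted from the excerpt due to space) supplies the required $\mathcal{F}$-equivalence. If $l > 2$ and $d_{\mathcal{F}}$ is the Hamming metric, MEP holds by the classical MacWilliams Extension Theorem. Finally, it should be noted for consistency that the Hamming metric corresponds to the covering $\{\{1\},\ldots,\{n\}\}$ with $l = n$ components, so for $n \geq 3$ it indeed falls in the $l > 2$ regime (and for $n = 2$ it coincides with a $k$-partition with $k=1$, making the two disjuncts compatible).

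Since all the substantive work has already been carried out in the preceding propositions, the only thing to verify at this stage is that the dichotomy $l=2$ versus $l>2$ is exhaustive for unconnected $\mathcal{F}$, which is immediate from the definition of connectedness. Consequently, there is no real obstacle here; the genuine difficulty lies in the omitted reverse direction of the two-component proposition, but for the present theorem we may invoke it as already established.
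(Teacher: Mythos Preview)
Your proposal is correct and matches the paper's approach exactly: the paper presents this theorem explicitly as a summary of the two preceding propositions, with no separate proof given. Your case split on $l=2$ versus $l>2$ and the appeal to those propositions (including the omitted extension construction for the converse in the two-component case) is precisely what is intended.
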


To complete the characterization of the combinatorial metrics satisfying MEP, the case of combinatorial metrics determined by connected coverings must be solved. Based on some examples and on the characterization of the group of linear isometries, we have the following conjecture:

\begin{conjecture}
	Suppose $\mathcal{F}$ is connected. The metric $d_\mathcal{F}$ satisfies MEP if, and only if, $|A|=|B|$ for every $A,B\in\mathcal{F}$ and $C\in\mathcal{F}$ for any $C\subset [n]$ with $|C|=|A|$. 
\end{conjecture}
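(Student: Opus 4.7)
The plan is to attack the biconditional in two halves, combining Proposition \ref{samelength} for the easy part of the forward direction with the structural description of $GL(n,\mathcal{F})_q$ developed in Section \ref{isom} for the reverse direction, and with the style of explicit code-plus-linear-map counterexamples used throughout Section \ref{ext} for the hard part of the forward direction.

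For the necessity direction ($\Rightarrow$), Proposition \ref{samelength} immediately forces $|A|=|B|$ for all $A,B\in\mathcal{F}$; denote the common cardinality by $k$. What remains is to prove that every $k$-subset $C\subset[n]$ belongs to $\mathcal{F}$. I would proceed by contrapositive: assume some $k$-subset $C$ is missing from $\mathcal{F}$ and, using connectedness, choose a basic set $A_0\in\mathcal{F}$ with $A_0\cap C\neq\emptyset$ and $A_0\setminus C\neq\emptyset$. Build $1$- or $2$-dimensional codes $\mathcal{C}_1, \mathcal{C}_2$ supported in $A_0\cup C$ together with a linear weight-preserving $t:\mathcal{C}_1\to\mathcal{C}_2$ in the spirit of the two-component proof, designed so that any linear extension $T$ is forced to send some word of $\mathcal{F}$-weight one (supported in $A_0$) to a vector whose support meets $C\setminus A_0$ and therefore --- because $\mathcal{F}$ has no redundancy and $C$ itself is not a basic set --- has $\mathcal{F}$-weight strictly greater than one, contradicting the isometry property of $T$.

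For the sufficiency direction ($\Leftarrow$), assume $\mathcal{F}=\binom{[n]}{k}$ and first specialize the description of $GL(n,\mathcal{F})_q$ from Section \ref{isom}: any two distinct coordinates are separated by some basic $k$-set, so each equivalence class $H_i$ is a singleton; no class dominates another; $K_M$ reduces to the invertible diagonal matrices; and every permutation of $[n]$ preserves $\mathcal{F}$, so $G=S_n$. Thus $GL(n,\mathcal{F})_q$ is the full monomial group $S_n\ltimes(\mathbb{F}_q^{*})^n$. Given a local equivalence $t:\mathcal{C}_1\to\mathcal{C}_2$, I would extend $t$ to a monomial map by induction on $\dim\mathcal{C}_1$, using the explicit formula $\mathrm{wt}_{\mathcal{F}}(x)=\lceil|\mathrm{supp}(x)|/k\rceil$ to pin the supports of minimal-weight codewords inside single basic $k$-sets, and exploiting the transitivity of the monomial group on such supports to place each newly added basis vector consistently. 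The hard part will be precisely this last construction: since $\mathrm{wt}_\mathcal{F}$ is strictly coarser than Hamming weight, a local $\mathcal{F}$-equivalence can genuinely alter Hamming weights and the classical MacWilliams extension theorem does not apply directly. I expect the main obstacle to be verifying that every partial monomial extension can be completed without an obstruction forcing incompatible images on two code words whose supports overlap in several basic sets --- this delicate coordinate-by-coordinate bookkeeping is, I believe, exactly what has kept the authors from upgrading the conjecture to a theorem.
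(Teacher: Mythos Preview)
There is no proof in the paper to compare against: the statement is explicitly labelled a \emph{conjecture}, introduced with the remark that the connected case ``must be solved'' and is supported only by ``some examples and \ldots the characterization of the group of linear isometries.'' Your submission is likewise a plan rather than a proof, and you yourself flag the sufficiency direction as the unresolved core.

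That said, your plan for the $(\Leftarrow)$ direction contains a concrete error. You claim that for $\mathcal{F}=\binom{[n]}{k}$ the machinery of Section~\ref{isom} forces $GL(n,\mathcal{F})_q$ to be the monomial group, and then aim to extend each local equivalence to a monomial map. But the monomial group is in general a \emph{proper} subgroup here. Take $n=3$, $k=2$, $q=2$: the linear map $T$ with $T(e_1)=e_1$, $T(e_2)=e_1+e_2$, $T(e_3)=e_1+e_3$ is invertible, fixes the unique weight-$2$ vector $e_1+e_2+e_3$, and therefore preserves $\mathrm{wt}_\mathcal{F}$, yet it is not monomial. (Incidentally this shows that either the equality $GL(n,\mathcal{F})=GK_M$ of Section~\ref{isom} is misstated or its omitted proofs hide a gap: in this example $K_M=\{I\}$ and $G\cong S_3$, giving only $6$ maps, whereas the full isometry group --- the stabiliser of $e_1+e_2+e_3$ in $GL(3,2)$ --- has order $24$.) Consequently your strategy of extending $t$ to a \emph{monomial} map cannot succeed even when a genuine $\mathcal{F}$-isometric extension does exist; for instance the local equivalence $t:e_1\mapsto e_1+e_2$ admits no monomial extension but is the restriction of the isometry $T\circ(1\,2)$ above. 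Any serious attack on the sufficiency half must begin from a correct description of $GL(n,\mathcal{F})_q$ for $\mathcal{F}=\binom{[n]}{k}$, which is strictly larger than the monomial group whenever $1<k<n$.
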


 \section*{Acknowledgment}
The authors would like to thank the S\~{a}o Paulo Research Foundation (FAPESP) for the financial support through grants 2013/25977-7, 2015/11286-8 and 2016/01551-9.

    \bibliographystyle{plain} 
    \bibliography{biblio}

\end{document}